\def\BibTeX{{\rm B\kern-.05em{\sc i\kern-.025em b}\kern-.08em
    T\kern-.1667em\lower.7ex\hbox{E}\kern-.125emX}}
\newtheorem{definition}{Definition}[section]
\newtheorem{lemma}{Lemma}[section]
\newtheorem{corollary}{Corollary}[section]
\begin{document}

\title{Normalized Gradient Descent for \\ 
Variational Quantum Algorithms\\
}
\author{\IEEEauthorblockN{Yudai Suzuki\IEEEauthorrefmark{1}, Hiroshi Yano\IEEEauthorrefmark{2}, Rudy Raymond\IEEEauthorrefmark{3}\IEEEauthorrefmark{4}, and Naoki Yamamoto\IEEEauthorrefmark{2}\IEEEauthorrefmark{4}}
\IEEEauthorblockA{\IEEEauthorrefmark{1}
Department of Mechanical Engineering,
Keio University, Hiyoshi 3-14-1, Kohoku, Yokohama 223-8522, Japan}
\IEEEauthorblockA{\IEEEauthorrefmark{2}
Department of Applied Physics and Physico-Informatics,
Keio University, Hiyoshi 3-14-1, Kohoku, Yokohama 223-8522, Japan}
\IEEEauthorblockA{\IEEEauthorrefmark{3}
IBM Quantum, IBM Japan, 19-21 Nihonbashi Chuo-ku, Tokyo 103-8510, Japan}
\IEEEauthorblockA{\IEEEauthorrefmark{4}
Quantum Computing Center, Keio University, Hiyoshi 3-14-1, Kohoku, Yokohama 223-8522, Japan}

}

\maketitle

%%%%%%%%%%%%%%%%% abstract %%%%%%%%%%%%%%%%%%%%%%%
\begin{abstract}
Variational quantum algorithms (VQAs) are promising methods that leverage noisy quantum computers and classical computing techniques for practical applications. 
In VQAs, the classical optimizers such as gradient-based optimizers are utilized to adjust the parameters of the quantum circuit so that the objective function is minimized. 
However, they often suffer from the so-called vanishing gradient or barren plateau issue. 
On the other hand, the normalized gradient descent (NGD) method, which employs the normalized gradient vector to update the parameters, has been successfully utilized in several optimization problems. 
Here, we study the performance of the NGD methods in the optimization of VQAs for the first time. 
Our goal is two-fold. 
The first is to examine the effectiveness of NGD and its variants for overcoming the vanishing gradient problems. 
The second is to propose a new NGD that can attain the faster convergence than the ordinary NGD. 
We performed numerical simulations of these gradient-based optimizers in the context of quantum chemistry where VQAs are used to find the ground state of a given Hamiltonian. 
The results show the effective convergence property of the NGD methods in VQAs, compared to the relevant optimizers without normalization. 
Moreover, we make use of some normalized gradient vectors at the past iteration steps to propose the novel \textit{historical NGD} that has a theoretical guarantee to accelerate the convergence speed, which is observed in the numerical experiments as well.
\end{abstract}

\begin{IEEEkeywords}
Variational Quantum Algorithms, Optimization, Normalized Gradient Descent

\end{IEEEkeywords}

%%%%%%%%%%%%%%%%% Introduction %%%%%%%%%%%%%%%%%%%%%%%
\section{Introduction}
Along with the recent rapid advances in quantum information processing devices, the increasing 
attention has been paid to the possibility that quantum computers can outperform the classical 
(conventional) computers in various research fields such as machine learning, chemistry, and finance. 
However, since the currently available quantum computers are not fault-tolerant, their capabilities 
are limited \cite{Preskill2018}. 
This has led the necessity to develop hybrid quantum-classical approaches that enable those noisy quantum devices to work, with the help of classical computers. 
The variational quantum algorithm (VQA) is one such strategy \cite{cerezo2020variational,bharti2021noisy,endo2021hybrid}. 
The VQA runs a parameterized quantum circuit (PQC) on a quantum computer, with variationaly 
updating the parameters by a classical optimizer to find a global minimum of the objective function. 
This approach is expected to show some quantum advantages, because PQCs with 
even short depth potentially have bigger expressibility than classical models 
such as the neural network. 
To date, a variety of VQAs has been proposed; the variational quantum eigensolver (VQE) for quantum chemistry 
\cite{peruzzo2014variational,mcclean2016theory,kandala2017hardware}, the 
quantum approximate optimization algorithm (QAOA) for combinatorial 
optimizations \cite{farhi2014quantum,farhi2016quantum,zhou2020quantum}, and 
quantum circuit learning algorithms for machine learning problems 
\cite{Havl_ek_2019,schuld2018circuitcentric,Mitarai_2018}.%\cite{Havl_ek_2019,schuld2018circuitcentric,Schuld_2019,Mitarai_2018}. 

Of course the performance of VQAs heavily depends on the power of classical optimizer, particularly the convergence speed of the optimization. 
Actually various optimizers have been tested in VQAs; 
the gradient-based optimizers such as adaptive moment estimation (ADAM) \cite{kingma2014adam}, the conjugate 
gradient (CG) \cite{hestenes1952methods}, the simultaneous perturbation stochastic approximation (SPSA) \cite{spall1992multivariate} and the 
natural gradient \cite{amari1998natural}, as well as the gradient-free optimizers such as Nelder-Mead \cite{nelder1965simplex} and 
COBYLA \cite{powell1994direct}. 
In this work, we study the gradient-based approach, with special attention to the serious issue recently recognized in this method. 
That is, it has been demonstrated that the VQA has the so-called vanishing gradient or the barren 
plateau issue, where the gradient vector of the objective function becomes exponentially small with the increase of the number of qubits \cite{mcclean2018barren,cerezo2021cost}. 
This makes the gradient-based optimizer inefficient for the optimization of VQAs. 
%Note that the gradient-free optimizers also suffer from a similar trainability issue \cite{}. 
So far, several circumventing approaches have been proposed to remedy this issue, such as the 
initialization-engineering technique and the tailored PQCs \cite{grant2019initialization,verdon2019learning,hadfield2019quantum,yamamoto2019natural,stokes2020quantum}. 

Note that the similar vanishing gradient issue can generally occur for non-convex optimization 
problems. 
Additionally, in such non-convex optimization problems, the so-called exploding gradient issue 
is also observed, meaning that the norm of the gradient vector takes a huge value and as a result 
the training becomes unstable. 
These detrimental issues, however, can be circumvented via rather a simple method that uses the 
normalized gradient vector to update the parameter through the learning process. 
This method is called the {\it normalized gradient descent (NGD)}~\cite{hazan2015beyond}. 
Though the vanishing or exploding gradient issues never happen, the NGD could be disadvantageous 
in view of the convergence properties because it does not have a norm-dependent flexibility to 
search the optimal parameter. 
Nonetheless, under some conditions, the NGD is proven to evade the saddle points faster than the 
ordinary gradient descent method in the setting of continuous-time dynamics \cite{murray2019revisiting}.

Here, we study the performance of the NGD method in the optimization of VQAs.
To the best of our knowledge, this work is the first to investigate the NGD methods in regards to the VQAs.
This paper focuses on two objectives.
The first objective is to examine the effectiveness of NGD and its variants for resolving the vanishing gradient issues. 
We applied those optimizers to several VQE problems, where the ground state of a given Hamiltonian is variationally sought. 
The numerical simulations show the good convergence property of the NGD methods, in comparison with the relevant optimizers that do not use the normalized gradient vector.
As the second objective, we exploit the normalized gradient vectors at the past iteration steps to propose the \textit{historical NGD}. 
In particular, we derive a set of proper learning rates of the historical normalized gradient vectors, under the assumption of the strictly-locally-quasi-convex (SLQC) objective functions. 
Hence the historical NGD is guaranteed to show the faster convergence than the ordinary NGD, which is also demonstrated in the numerical experiments.

The rest of the paper is organized as follows. 
In Section~\ref{sec:preliminaries}, we show some gradient-based optimizers studied in this work. 
We then describe our historical NGD method that utilizes the normalized gradient vectors at the previous iteration step.
Subsequently, Section~\ref{sec:exp_result} demonstrates the numerical simulation of these optimizers for several VQE problems. 
At last, we conclude the paper in Section~\ref{sec:conclusion}.

%%%%%%%%%%%%%%%%% Preliminaries %%%%%%%%%%%%%%%%%%%%%%%
\section{Preliminaries} \label{sec:preliminaries}
The gradient-based optimizers use the gradient vector of a objective function to update the 
parameters, for finding the global minimum of the function. 
The straightforward method is the gradient descent (GD), which is expressed as follows:
\begin{equation}
    {\bf x}_{t+1} = {\bf x}_{t} - \eta g_t,
\end{equation}
where ${\bf x}_{t}$ represents the set of parameters, $g_t=\nabla f({\bf x}_{t})$ is the gradient 
vector of the differentiable function $f({\bf x}_{t})$, and $\eta\in\mathbb{R}$ is the learning rate. 
GD can effectively find the global minimum when the objective function is convex. 
However, when the objective function is non-convex, GD often poorly perform owing to the trainability 
problems, typically the vanishing gradient issue caused by the plateau landscape of the objective 
function and the exploding gradient one due to the steep cliffs. 

To date, numerous variants of GD have been proposed to circumvent those drawbacks. 
In this section, we briefly describe some gradient-based optimizers used in this paper.

\subsection{Normalized Gradient Descent}
NGD is a simple method that resolves the aforementioned vanishing or exploding gradient issues 
by normalizing the gradient vector. 
NGD updates the parameters according to the following formula:
\begin{equation} \label{eq:ngd}
    {\bf x}_{t+1} = {\bf x}_{t} - \eta \hat{g}_t,
\end{equation}
where $\hat{g}_t=\nabla f({\bf x}_{t})/\|\nabla f({\bf x}_{t})\|$ is the normalized gradient vector. 
Because $\| \hat{g}_t \|=1$ for all $t$, NGD of course neither vanishes nor explodes. 
Hence we can expect NGD will show good convergence property as well as stable learning. 
Actually, it has been proven in the framework of continuous-time dynamics that NGD can escape 
the saddle point faster than GD \cite{murray2019revisiting}. 
Also, Ref.~\cite{hazan2015beyond} proved that NGD can converge to a global minimum for a wider class of functions, assuming the strictly-locally-quasi-convex (SLQC) property of the objective function, which we will explain later on. 

\subsection{Nesterov's Accelerated Gradient Method}
Nesterov's Accelerated Gradient (NAG) method \cite{nesterov1983method} is a simple modification of the momentum method\cite{polyak1964some}, 
which is also a variant of GD. 
In the momentum method, a moving average of the past gradients are taken into account to realize 
faster convergence and alleviate the oscillation along the ridges of the canyon in the landscape of 
the objective function.
The momentum method updates the parameters in the following way; 
\begin{align*}
    m_{t} = \beta m_{t-1} - \eta \nabla f({\bf x}_{t}), ~~
   {\bf x}_{t+1} = {\bf x}_{t} + m_{t}, 
\end{align*}
where $\beta$ and $\eta$ are positive scalars. 

As for NAG method, the update rule is represented as
\begin{align}
    {\bf x}_{t+1} &= {\bf y}_{t} - \eta \nabla f({\bf y}_{t}),\label{eq:upd_NSG}\\
    {\bf y}_{t} &= {\bf x}_{t} + \gamma_t ({\bf x}_{t}-{\bf x}_{t-1}),\label{eq:mom_NSG} \\
    \gamma_t &= \frac{\rho_{t-1}-1}{\rho_{t}}, \label{eq:coef1_NSG}\\
    \rho_{t} &= \frac{1+\sqrt{1+4\rho_{t-1}^{2}}}{2}.
\label{eq:coef2_NSG}
\end{align}
Here, $\rho_t$ and $\gamma_t$ at each iteration step are recursively calculated so that the optimal convergence rate for the smooth convex function is achieved \cite{nesterov1983method}.
In our experiment, similar to~\cite{sutskever2013importance} we set the initial value of $\rho$ as 1, i.e. $\rho_0 = 1$.

Note that the NAG method is rewritten as
\begin{align*}
    m_{t} = \beta m_{t-1} - \eta \nabla f({\bf x}_{t-1}+\beta m_{t-1}), ~~~
   {\bf x}_{t+1} = {\bf x}_{t} + m_{t}, 
\end{align*}
which shows that the only difference between the momentum method and NAG is the point at which 
we calculate the gradient \cite{sutskever2013importance}. 
The NAG method recently attracts much attention in the convex optimization community due to its good convergence property in some situations. 

\subsection{Adaptive Moment Estimation}
ADAM \cite{kingma2014adam} is a widely-used optimizer in the field of machine learning, particularly for deep neural networks. 
In a broad sense, ADAM utilizes the advantages of both the momentum method and RMSprops\cite{tieleman2012lecture}; 
ADAM not only keeps the moving average of the past gradients, but also computes the adaptive 
learning rate to reduce the oscillation. 
The update rule of ADAM is described as follows;
\begin{align}
   \bar{m}_{t} &= \beta_1 \bar{m}_{t-1} - (1-\beta_1)\nabla f({\bf x}_{t}),\\
   \bar{v}_{t} &= \beta_2 \bar{v}_{t-1} - (1-\beta_2)\nabla f({\bf x}_{t})^2,\\
   m_t &= \frac{\bar{m}_{t}}{1-\beta_1^{t+1}},\\
   \quad v_t &= \frac{\bar{v}_{t}}{1-\beta_2^{t+1}}, \\
    {\bf x}_{t+1} &= {\bf x}_{t} - \eta \frac{m_{t}}{\sqrt{v_t}+\epsilon} ,
\end{align}
where $\epsilon$ is a small constant introduced for numerical stability.
In \cite{kingma2014adam}, the hyperparameters are set as $\beta_1=0.9$, $\beta_2=0.999$ and $\epsilon=10^{-8}$, which we also choose in our experiments.

%\bigskip
Note that we can take another type of gradient-based optimizer, which utilizes higher-order derivative 
information such as the Hessian of an objective function. 
Typically used are Newton's method and quasi Newton method (e.g., the 
Broyden–Fletcher–Goldfarb–Shanno (BFGS) algorithm \cite{fletcher2013practical} and a sequential least squares programming 
(SLSQP) algorithm \cite{kraft1994algorithm}). 
While they are computationally expensive compared to the variants of GD, these methods show 
faster convergence. 
However, this work focuses on the effectiveness of the optimizers with the normalized gradient vectors for VQAs, and thus we will not consider the 
higher-order gradient-based optimizer.

%%%%%%%%%%%%%%%%% Methods %%%%%%%%%%%%%%%%%%%%%%%

\section{Historical NGD}
\label{sec:method}

As mentioned before, the possible drawback of NGD is that it could be slower compared to GD 
due to the restricted norm condition. 
The historical NGD described here may be used to mitigate this issue; note that this method can 
also be applied to general optimization problem other than VQAs. 
The basics of this method is the provable convergence property of NGD under the assumption of the strictly-locally-quasi-convex (SLQC) objective function \cite{hazan2015beyond}. 
A particularly useful result is that, according to \cite{hazan2015beyond}, NGD can converge to an 
$\epsilon$-optimal minimum of the SLQC objective function with the rate $\mathcal{O}(1/\epsilon^2)$; 
in fact we will make use of the lemmas related to this fact, to derive the proper learning rates of NGD 
with historical gradients.

In this section, we firstly introduce the relationship between the SLQC and NGD shown in 
\cite{hazan2015beyond}, which is followed by deriving lemmas used to prove our result. 
Then we show that the one-step historical NGD can indeed accelerate the convergence speed. 
This method is further generalized to the historical NGD based on arbitrary $m$ normalized gradient 
vectors used in the past iteration steps.

\subsection{NGD for the Strictly-Locally-Quasi-Convex Function}

In a broad sense, the SLQC function is the generalization of unimodal (or quasi-convex) functions 
with multi-dimensions, which can take even a plateau shape. 
The definition of the SLQC is as follows \cite{hazan2015beyond}. 

\begin{definition}\textbf{(Local-Quasi-Convexity)}
\label{def:LocalQC}
Let ${\bf x},{\bf z}\in \mathbb{R}^d$ with the dimension $d$. 
Also let $\kappa$ and $\epsilon$ be positive constants. 
We consider a differentiable function $f: \mathbb{R}^d \mapsto \mathbb{R}$. 
Then $f$ is called $(\epsilon,\kappa,{\bf z})$-Strictly-Locally-Quasi-Convex (SLQC) at ${\bf x}$, 
if at least one of the following conditions holds: 
\begin{enumerate}
\item $f({\bf x})- f({\bf z})\leq  \epsilon$. 
\item $\|\nabla f({\bf x})\|> 0$, and for every ${\bf y}\in\mathbb{B}({\bf z},\epsilon/\kappa)$ it holds 
that $\braket{\nabla f({\bf x}), {\bf y}-{\bf x}}\leq 0$, 
\end{enumerate}
where $\mathbb{B}({\bf x},r)$ denotes a ball of radius $r$ around ${\bf x}$. 
\end{definition}

In \cite{hazan2015beyond}, the authors proved that NGD converges to the global minimum of 
a SLQC objective function, by taking the learning rate $\eta=\epsilon/\kappa$ in Eq.~\eqref{eq:ngd}. 
Note that the SLQC condition is not too strict; some intriguing SLQC functions 
are studied in \cite{hazan2015beyond}, such as the generalized linear model with certain setups.

Now we consider a SLQC function $f({\bf x})$, and assume that it has a local or global minimum 
point ${\bf x}^*$. 
Also recall that $\hat{g}_t=\nabla f({\bf x}_t)/\|\nabla f({\bf x}_t)\|$ is a normalized gradient vector, 
and the NGD policy is given by ${\bf x}_{t+1} = {\bf x}_t - \eta\hat{g}_t$ with the learning rate $\eta=\epsilon/\kappa$. 
Then the following three lemmas hold \cite{hazan2015beyond}. 
Note that, the first lemma is proved in \cite{hazan2015beyond} and we obtain the remaining lemmas by following a similar proof.

\begin{lemma}\label{lem:grad1} 
$$
\braket{\hat{g}_t,{\bf x}_t - {\bf x}^*} \ge \epsilon/\kappa. 
$$
\end{lemma}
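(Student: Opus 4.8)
The plan is to split into the two cases given by the SLQC definition and show the inequality holds in each. Fix the iteration step $t$ and write ${\bf x} = {\bf x}_t$. We want to show $\braket{\hat{g}_t, {\bf x}_t - {\bf x}^*} \ge \epsilon/\kappa$. By assumption $f$ is $(\epsilon, \kappa, {\bf x}^*)$-SLQC at ${\bf x}_t$, so at least one of the two conditions in Definition \ref{def:LocalQC} (with ${\bf z} = {\bf x}^*$) holds. First I would observe that we should actually be working under the premise that the first SLQC condition \emph{fails} — i.e. $f({\bf x}_t) - f({\bf x}^*) > \epsilon$ — since otherwise ${\bf x}_t$ is already $\epsilon$-optimal and the lemma is either stated for the non-trivial regime or vacuously handled; I would state this reduction explicitly at the start. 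Hence condition (2) applies: $\|\nabla f({\bf x}_t)\| > 0$ (so $\hat{g}_t$ is well-defined), and for every ${\bf y} \in \mathbb{B}({\bf x}^*, \epsilon/\kappa)$ we have $\braket{\nabla f({\bf x}_t), {\bf y} - {\bf x}_t} \le 0$.

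The key step is to choose the right test point ${\bf y}$ in that ball. The natural choice is to move from ${\bf x}^*$ a distance $\epsilon/\kappa$ in the direction of $\hat{g}_t$, namely
\begin{equation*}
{\bf y} = {\bf x}^* + \frac{\epsilon}{\kappa}\,\hat{g}_t,
\end{equation*}
which lies on the boundary of $\mathbb{B}({\bf x}^*, \epsilon/\kappa)$ since $\|\hat{g}_t\| = 1$. Plugging this into condition (2) gives
\begin{equation*}
0 \ge \braket{\nabla f({\bf x}_t),\, {\bf y} - {\bf x}_t} = \braket{\nabla f({\bf x}_t),\, {\bf x}^* - {\bf x}_t} + \frac{\epsilon}{\kappa}\braket{\nabla f({\bf x}_t), \hat{g}_t}.
\end{equation*}
Now I would use that $\braket{\nabla f({\bf x}_t), \hat{g}_t} = \|\nabla f({\bf x}_t)\|$ by definition of the normalized gradient, rearrange to get $\braket{\nabla f({\bf x}_t), {\bf x}_t - {\bf x}^*} \ge (\epsilon/\kappa)\|\nabla f({\bf x}_t)\|$, and finally divide both sides by $\|\nabla f({\bf x}_t)\| > 0$ to conclude $\braket{\hat{g}_t, {\bf x}_t - {\bf x}^*} \ge \epsilon/\kappa$.

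I do not expect a serious obstacle here — this is essentially the defining property of SLQC unwound through one well-chosen test point, and it is the lemma already established in \cite{hazan2015beyond}. The only subtlety worth flagging is the logical handling of the two SLQC cases: one must be careful that the lemma is understood to apply in the regime where ${\bf x}_t$ is not yet $\epsilon$-optimal (or else restate it as a dichotomy), since in the first case $\nabla f({\bf x}_t)$ could vanish and $\hat{g}_t$ is undefined. I would make that assumption explicit in the proof's first sentence and then proceed with the test-point computation above.
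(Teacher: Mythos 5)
Your proof is correct and is essentially the standard argument from \cite{hazan2015beyond}, which is exactly what the paper relies on (it states Lemma~\ref{lem:grad1} without reproducing the proof, deferring to that reference): invoke the second SLQC condition at ${\bf z}={\bf x}^*$ with the test point ${\bf y}={\bf x}^*+(\epsilon/\kappa)\hat{g}_t$ and divide by $\|\nabla f({\bf x}_t)\|>0$. Your explicit handling of the dichotomy (restricting to the regime where ${\bf x}_t$ is not yet $\epsilon$-optimal, so that condition (2) applies and $\hat{g}_t$ is well-defined) is a reasonable and correctly flagged reduction that the paper leaves implicit.
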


Using this lemma, Ref.~\cite{hazan2015beyond} showed that at every update of ${\bf x}_{t}$ in NGD, 
the distance between ${\bf x}_t$ and ${\bf x}^*$ is reduced by at least $(\epsilon/\kappa)^2$.

\begin{lemma} \label{lem:grad2}
Define $\delta = \braket{\hat{g}_{t+1}, \hat{g}_{t}}$. Then,
$$
\braket{\hat{g}_{t+1},{\bf x}_{t} - {\bf x}^*} \ge \left(\epsilon/\kappa\right)\left(1 + \delta\right).
$$
\end{lemma}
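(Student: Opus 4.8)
The plan is to relate the inner product $\braket{\hat{g}_{t+1},{\bf x}_t-{\bf x}^*}$ to the already-established Lemma \ref{lem:grad1}, which gives $\braket{\hat{g}_{t+1},{\bf x}_{t+1}-{\bf x}^*}\ge \epsilon/\kappa$, by writing ${\bf x}_t$ in terms of ${\bf x}_{t+1}$ using the NGD update rule. Concretely, since ${\bf x}_{t+1}={\bf x}_t-\eta\hat{g}_t$ with $\eta=\epsilon/\kappa$, we have ${\bf x}_t={\bf x}_{t+1}+\eta\hat{g}_t$, hence ${\bf x}_t-{\bf x}^*=({\bf x}_{t+1}-{\bf x}^*)+\eta\hat{g}_t$. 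Taking the inner product with $\hat{g}_{t+1}$ and using linearity gives
\begin{equation*}
\braket{\hat{g}_{t+1},{\bf x}_t-{\bf x}^*}=\braket{\hat{g}_{t+1},{\bf x}_{t+1}-{\bf x}^*}+\eta\braket{\hat{g}_{t+1},\hat{g}_t}.
\end{equation*}

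Next I would apply Lemma \ref{lem:grad1} at iteration $t+1$ to bound the first term from below by $\epsilon/\kappa$, and recognize the second term as $\eta\delta=(\epsilon/\kappa)\delta$ by the definition $\delta=\braket{\hat{g}_{t+1},\hat{g}_t}$. Combining, $\braket{\hat{g}_{t+1},{\bf x}_t-{\bf x}^*}\ge \epsilon/\kappa+(\epsilon/\kappa)\delta=(\epsilon/\kappa)(1+\delta)$, which is exactly the claim. One must check that the hypotheses needed to invoke Lemma \ref{lem:grad1} at step $t+1$ are in force — namely that $f$ is SLQC at ${\bf x}_{t+1}$ and that we are in the regime where the gradient is nonzero (case 2 of Definition \ref{def:LocalQC}); this is the standing assumption throughout this subsection, inherited from the setup in \cite{hazan2015beyond}, so no extra work is required.

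The main obstacle, such as it is, is essentially bookkeeping rather than a genuine difficulty: one must be careful that Lemma \ref{lem:grad1} is being applied at the correct index (at ${\bf x}_{t+1}$, not ${\bf x}_t$), since the statement of Lemma \ref{lem:grad2} mixes the normalized gradient at step $t+1$ with the parameter at step $t$. Once the substitution ${\bf x}_t={\bf x}_{t+1}+\eta\hat{g}_t$ is made, everything reduces to linearity of the inner product and a single invocation of the previous lemma, so the proof is short. This is also why the excerpt remarks that the remaining lemmas follow "by following a similar proof" — the structure is the same telescoping/substitution argument used for Lemma \ref{lem:grad1}, merely shifted by one NGD step.
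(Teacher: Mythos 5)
Your proof is correct and follows essentially the same route as the paper's: both substitute the NGD update ${\bf x}_{t+1}={\bf x}_t-(\epsilon/\kappa)\hat{g}_t$ into the bound $\braket{\hat{g}_{t+1},{\bf x}_{t+1}-{\bf x}^*}\ge\epsilon/\kappa$ from Lemma~\ref{lem:grad1} applied at step $t+1$, then use linearity of the inner product and the definition of $\delta$. Your added remark about checking the hypotheses at ${\bf x}_{t+1}$ is a reasonable point of care that the paper leaves implicit.
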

\begin{proof}
Using Lemma~\ref{lem:grad1}, we have
\begin{eqnarray*}
&& \braket{\hat{g}_{t+1},{\bf x}_{t+1} - {\bf x}^*} \ge (\epsilon/\kappa)\\
&& \Leftrightarrow \braket{\hat{g}_{t+1},{\bf x}_{t} - (\epsilon/\kappa)\hat{g}_t - {\bf x}^*} 
    \ge (\epsilon/\kappa)\\
&& \Leftrightarrow \braket{\hat{g}_{t+1},{\bf x}_{t} -{\bf x}^*} 
     - (\epsilon/\kappa)\braket{\hat{g}_{t+1},\hat{g}_t} \ge (\epsilon/\kappa)\\
&& \Leftrightarrow \braket{\hat{g}_{t+1},{\bf x}_{t} - {\bf x}^*} \ge (1 + \delta) (\epsilon/\kappa).
\end{eqnarray*}
\end{proof}

\begin{lemma} \label{lem:grad3}
Define $\delta_{a,b} = \braket{\hat{g}_{t+a},\hat{g}_{t+b}}$. 
Then, for a natural number $m\in\mathbb{N}$,
$$
  \braket{\hat{g}_{t+m},{\bf x}_{t} - {\bf x}^*} \ge 
      \left(\epsilon/\kappa\right)\left(1 + \sum_{i=0}^{m-1}\delta_{m,i}\right) .
$$
\end{lemma}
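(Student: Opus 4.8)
The plan is to bypass induction and argue directly by telescoping the NGD update over $m$ steps, in the same spirit as the proof of Lemma~\ref{lem:grad2} (which is precisely the case $m=1$). First I would invoke Lemma~\ref{lem:grad1} at iteration $t+m$ rather than at $t$, which gives $\braket{\hat{g}_{t+m},{\bf x}_{t+m}-{\bf x}^*}\ge\epsilon/\kappa$. This is legitimate because the SLQC hypothesis is assumed to hold at every iterate along the trajectory, so Lemma~\ref{lem:grad1} holds verbatim with the index $t$ shifted to $t+m$.

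Next I would unroll the update rule ${\bf x}_{s+1}={\bf x}_s-(\epsilon/\kappa)\hat{g}_s$ from $s=t$ up to $s=t+m-1$, obtaining ${\bf x}_{t+m}={\bf x}_t-(\epsilon/\kappa)\sum_{i=0}^{m-1}\hat{g}_{t+i}$. Substituting this telescoped expression into the inequality above and using bilinearity of the inner product yields $\braket{\hat{g}_{t+m},{\bf x}_t-{\bf x}^*}-(\epsilon/\kappa)\sum_{i=0}^{m-1}\braket{\hat{g}_{t+m},\hat{g}_{t+i}}\ge\epsilon/\kappa$. Recognizing $\braket{\hat{g}_{t+m},\hat{g}_{t+i}}=\delta_{m,i}$ by definition and rearranging gives exactly the claimed bound $\braket{\hat{g}_{t+m},{\bf x}_t-{\bf x}^*}\ge(\epsilon/\kappa)\bigl(1+\sum_{i=0}^{m-1}\delta_{m,i}\bigr)$.

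There is no serious obstacle here: the argument collapses to a one-line computation once the telescoped form of ${\bf x}_{t+m}$ is written down. The only points to handle with care are the index bookkeeping in the sum and making explicit that Lemma~\ref{lem:grad1} is being applied at the shifted iteration $t+m$. As a sanity check, setting $m=1$ recovers Lemma~\ref{lem:grad2} with $\delta=\delta_{1,0}$, and $m=0$ reduces trivially to Lemma~\ref{lem:grad1}. An alternative route would be induction on $m$, peeling off one NGD step at a time and reusing the manipulation from the proof of Lemma~\ref{lem:grad2}, but the direct telescoping is shorter and makes the role of the cross terms $\delta_{m,i}$ most transparent.
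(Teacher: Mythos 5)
Your proposal is correct and follows essentially the same route as the paper's proof: apply Lemma~\ref{lem:grad1} at the shifted index $t+m$, unroll the NGD update to write ${\bf x}_{t+m}={\bf x}_t-(\epsilon/\kappa)\sum_{i=0}^{m-1}\hat{g}_{t+i}$, and rearrange using bilinearity of the inner product. The only difference is presentational: you make explicit the index shift and the sanity checks at $m=0,1$, which the paper leaves implicit.
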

%\footnotesize
\begin{proof}
Using Lemma~\ref{lem:grad1}, we have
\begin{eqnarray*}
&& \braket{\hat{g}_{t+m},{\bf x}_{t+m} - {\bf x}^*} \ge (\epsilon/\kappa)\\
&& \Leftrightarrow 
      \braket{\hat{g}_{t+m},{\bf x}_{t} - (\epsilon/\kappa)\sum_{i=0}^{m-1}\hat{g}_{t+i} - {\bf x}^*} 
       \ge (\epsilon/\kappa)\\
&& \Leftrightarrow
       \braket{\hat{g}_{t+m},{\bf x}_{t} -{\bf x}^*} 
       - (\epsilon/\kappa)\sum_{i=0}^{m-1}\braket{\hat{g}_{t+m},\hat{g}_{t+i}} \ge (\epsilon/\kappa)\\
&& \Leftrightarrow
       \braket{\hat{g}_{t+m},{\bf x}_{t} - {\bf x}^*} 
       \ge \left(1 + \sum_{i=0}^{m-1}\delta_{m,i}\right) (\epsilon/\kappa).
\end{eqnarray*}
\end{proof}

\subsection{One step historical NGD}

Here we show that the historical NGD, which updates the variable using $\hat{g}_{t+1}$ in addition 
to $\hat{g}_{t}$, helps accelerating the convergent speed of NGD compared to the ordinary NGD. 
Note that the inner product $\delta_t = \braket{\hat{g}_t,\hat{g}_{t+1}}$ used in the following result 
satisfies $-1 < \delta_t  \le 1$ because $\hat{g}_t$ and $\hat{g}_{t+1}$ are normalized.

\begin{corollary}
Let $\delta_t = \braket{\hat{g}_t,\hat{g}_{t+1}}$, satisfying $-1 < \delta_t  \le 1$. 
Consider the following update policy: 
$$
{\bf x}_{t+2} = {\bf x}_{t} + \eta_1 \hat{g}_t + \eta_2 \hat{g}_{t+1}, 
$$ 
where the learning rate $\eta_1, \eta_2$ are determined as follows: 
\begin{itemize}
\item When $ -1 < \delta_t \le (\sqrt{5}-1)/2$, 
\[
      \eta_1 = - \left(\frac{\epsilon}{\kappa}\right)\frac{1-\delta_t - \delta_t^2}{1-\delta_t^2}, ~~
      \eta_2 = -\left(\frac{\epsilon}{\kappa}\right)\frac{1}{1-\delta_t^2}. 
\]
\item When $(\sqrt{5}-1)/2 < \delta_t \le1$, 
\[
      \eta_1 = 0, ~~ \eta_2 = - \left(\frac{\epsilon}{\kappa}\right)\left(1 + \delta_t\right). 
\]
\end{itemize}
Then it holds that 
$$
\|{\bf x}_{t+2} - {\bf x}^*\|^2 <  \|{\bf x}_{t} - {\bf x}^*\|^2  - c \left(\epsilon/\kappa\right)^2, 
$$
for $ c \ge 2$. 

\end{corollary}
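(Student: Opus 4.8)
The plan is to expand the squared distance produced by the combined two‑step update, reduce everything to lower bounds on two inner products via Lemmas~\ref{lem:grad1} and~\ref{lem:grad2}, and then recognize $(\eta_1,\eta_2)$ as the minimizer of the resulting quadratic bound. Write $r:=\epsilon/\kappa$ and ${\bf d}:={\bf x}_t-{\bf x}^*$. Since $\|\hat{g}_t\|=\|\hat{g}_{t+1}\|=1$ and $\braket{\hat{g}_t,\hat{g}_{t+1}}=\delta_t$, expanding $\|{\bf x}_{t+2}-{\bf x}^*\|^2=\|{\bf d}+\eta_1\hat{g}_t+\eta_2\hat{g}_{t+1}\|^2$ gives
\begin{align*}
\|{\bf x}_{t+2}-{\bf x}^*\|^2 &= \|{\bf d}\|^2 + 2\eta_1\braket{\hat{g}_t,{\bf d}}\\
&\quad + 2\eta_2\braket{\hat{g}_{t+1},{\bf d}} + \eta_1^2 + \eta_2^2 + 2\eta_1\eta_2\delta_t .
\end{align*}
Lemma~\ref{lem:grad1} gives $\braket{\hat{g}_t,{\bf d}}\ge r>0$ and Lemma~\ref{lem:grad2} gives $\braket{\hat{g}_{t+1},{\bf d}}\ge r(1+\delta_t)>0$, the positivity using $\delta_t>-1$. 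As only these \emph{lower} bounds are available, the two cross terms can be bounded from above only when $\eta_1\le 0$ and $\eta_2\le 0$; under that sign restriction the right‑hand side is at most
\[
  U(\eta_1,\eta_2):=\|{\bf d}\|^2 + 2\eta_1 r + 2\eta_2 r(1+\delta_t) + \eta_1^2 + \eta_2^2 + 2\eta_1\eta_2\delta_t .
\]

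Next I would minimize $U$ --- strictly convex for $|\delta_t|<1$ --- over the feasible set $\{\eta_1\le 0,\ \eta_2\le 0\}$. Solving $\nabla U=0$ gives the unconstrained stationary point $\eta_1=-r(1-\delta_t-\delta_t^2)/(1-\delta_t^2)$ and $\eta_2=-r/(1-\delta_t^2)$; here $\eta_2<0$ always, whereas $\eta_1\le 0$ exactly when $1-\delta_t-\delta_t^2\ge 0$, i.e. when $\delta_t\le(\sqrt5-1)/2$, which is the first bullet of the statement. When $\delta_t>(\sqrt5-1)/2$ this minimizer is infeasible ($\eta_1>0$), so, by convexity and since the constraint $\eta_2\le 0$ stays inactive, the constrained optimum sits on the face $\eta_1=0$; minimizing the one‑variable $U(0,\eta_2)$ then yields $\eta_2=-r(1+\delta_t)$, the second bullet. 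A one‑line check at $\delta_t=(\sqrt5-1)/2$, where $\delta_t^2=1-\delta_t$, shows the two branches coincide on the threshold, so the update policy is well defined.

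It then remains to evaluate $U$ at these optima. Direct substitution gives $U_{\min}=\|{\bf d}\|^2-r^2(2-\delta_t^2)/(1-\delta_t^2)$ in the first regime and $U_{\min}=\|{\bf d}\|^2-r^2(1+\delta_t)^2$ in the second. Finally I would observe that $(2-\delta_t^2)/(1-\delta_t^2)=1+1/(1-\delta_t^2)\ge 2$ for $|\delta_t|<1$, and $(1+\delta_t)^2>\big((1+\sqrt5)/2\big)^2=(3+\sqrt5)/2>2$ on the second regime; hence in both cases $\|{\bf x}_{t+2}-{\bf x}^*\|^2\le \|{\bf x}_t-{\bf x}^*\|^2 - c\,(\epsilon/\kappa)^2$ with $c\ge 2$ (in fact $c>2$ except at the isolated value $\delta_t=0$), whereas the corresponding guarantee for two ordinary NGD steps is a reduction of only $2(\epsilon/\kappa)^2$ --- the claimed acceleration.

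The substantive step, everything else being the routine algebra of a $2\times 2$ linear system and a couple of scalar inequalities, is the observation that the cross terms in the expansion are controllable only under $\eta_1,\eta_2\le 0$. This converts an unconstrained least‑squares‑type minimization into a constrained one, and it is precisely this constraint boundary that produces the golden‑ratio threshold $(\sqrt5-1)/2$ and the degenerate choice $\eta_1=0$; getting that case split right is the only place genuine care is needed.
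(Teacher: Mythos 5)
Your proof is correct and follows essentially the same route as the paper's: expand $\|{\bf x}_{t+2}-{\bf x}^*\|^2$, bound the cross terms via Lemmas~\ref{lem:grad1} and~\ref{lem:grad2} under the sign constraint $\eta_1,\eta_2\le 0$, and substitute the stated learning rates to get the decrements $(2-\delta_t^2)/(1-\delta_t^2)$ and $(1+\delta_t)^2$, both at least $2$. Your additional derivation of the learning rates as the constrained minimizers of the quadratic bound (with the golden-ratio threshold arising from the constraint $\eta_1\le 0$ becoming active) is exactly the quadratic-programming viewpoint the paper defers to its multi-step section, so nothing is missing.
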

\begin{proof}
By simple algebra, we can obtain
\begin{equation} \label{eq:dist_t+2_optimal}
\begin{split}
\|{\bf x}_{t+2} - {\bf x}^*\|^2 &= \braket{{\bf x}_{t+2} - {\bf x}^*, {\bf x}_{t+2}-{\bf x}^*} \\
&= \|{\bf x}_{t} - {\bf x}^*\|^2 + \| \eta_1\hat{g}_t + \eta_2 \hat{g}_{t+1} \|^2 \\ 
 &\quad +  2 \braket{{\bf x}_{t}-{\bf x}^*,\eta_1\hat{g}_t + \eta_2 \hat{g}_{t+1} } \\
&= \|{\bf x}_{t} - {\bf x}^*\|^2 + \eta_1^2 + \eta_2^2 + 2\eta_1\eta_2\delta_t\\ 
&\quad + 2\eta_1\braket{\hat{g}_t,{\bf x}_{t}-{\bf x}^*} + 2\eta_2\braket{\hat{g}_{t+1},{\bf x}_{t}-{\bf x}^*}\\
&\le \|{\bf x}_{t} - {\bf x}^*\|^2 + \eta_1^2 + \eta_2^2 + 2\eta_1\eta_2\delta_t \\
&\quad+ 2\eta_1\left(\epsilon/\kappa\right) + 2\eta_2(\epsilon/\kappa)(1+\delta_t).
\end{split}
\end{equation}
Here, we substitute ${\bf x}_{t} + \eta_1 \hat{g}_t + \eta_2 \hat{g}_{t+1}$ for ${\bf x}_{t+2}$ in the second equality.
Also, the last inequality is due to Lemmas \ref{lem:grad1} and \ref{lem:grad2} as well as 
$\eta_1, \eta_2 \le 0$. 
The proof follows by substituting the values of $\eta_1$ and $\eta_2$ to obtain:
\begin{itemize}
\item  When $ -1 < \delta_t \le (\sqrt{5}-1)/2$, 
$$
\|{\bf x}_{t+2} - {\bf x}^*\|^2  \le \|{\bf x}_{t} - {\bf x}^*\|^2 - \frac{2-\delta_t^2}{\left(1-\delta_t^2\right)}\left(\epsilon/\kappa\right)^2.
$$

\item When $(\sqrt{5}-1)/2 < \delta_t  \le 1 $, 
$$
\|{\bf x}_{t+2} - {\bf x}^*\|^2  \le \|{\bf x}_{t} - {\bf x}^*\|^2 - \left(1+\delta_t\right)^2 \left(\epsilon/\kappa\right)^2.
$$
\end{itemize}
\end{proof}

Note that, when the ordinary NGD is successively applied twice, we have 
\[
     \|{\bf x}_{t+2} - {\bf x}^*\|^2 \le  \|{\bf x}_{t} - {\bf x}^*\|^2  - 2\left(\epsilon/\kappa\right)^2, 
\]
meaning that the historical NGD can show better convergence than the ordinary NGD.

\subsection{Multi-step historical NGD}

In the previous subsection, we consider the historical NGD based on the gradient information 
at one past iteration step. 
Here we generalize the idea to NGD with arbitrary $m$ normalized gradient vectors in the past 
iteration steps.

Note that, in the previous case, the key to have the result is the proper choice of learning rates 
$\eta_1$ and $\eta_2$, and this can be done by quadratic programming. 
Actually, the learning rates can be obtained by minimizing the second and subsequent terms in 
the right hand side of the last inequality in Eq.~\eqref{eq:dist_t+2_optimal}, which can be expressed 
as the quadratic function (denoted as $h(\eta_1,\eta_2)$) in the following way;
\begin{equation} \label{eq:motivating_ex}
\begin{split}
h(\eta_1,\eta_2) &=  \eta_1^2 + \eta_2^2 + 2\eta_1\eta_2\delta_t + 2\eta_1\left(\epsilon/\kappa\right) + 2\eta_2(\epsilon/\kappa)(1+\delta_t)\\
&= \left[\begin{array}{cc} \eta_{1} & \eta_{2} \end{array} \right] \left[\begin{array}{cc} 1 & \delta_{t}  \\ \delta_{t} & 1  \end{array} \right] \left[\begin{array}{c} \eta_{1} \\ \eta_{2} \\ \end{array} \right] \\
&\quad + \left[\begin{array}{cc} 2\epsilon/\kappa & 2\epsilon/\kappa(1+\delta_{t}) \\ \end{array} \right] \left[\begin{array}{cc} \eta_{1} \\ \eta_{2} \\ \end{array} \right]\\
    &= {\bf y}^{\mathrm{T}} A {\bf y} + C{\bf y}.
\end{split}
\end{equation} 
The learning rates are obtained by solving the minimum of the above quadratic function, under 
the constraint $\eta_1, \eta_2 \le 0$. 
We can apply this strategy to the case using $m$ previous gradient vectors for the historical NGD, 
which we call NGD$m$. 
The update rule of the NGD$m$ is given by 
\begin{equation}
    {\bf x}_{t+m} = {\bf x}_{t} + \sum_{i=1}^{m} \eta_i \hat{g}_{t+i-1},
\end{equation}
where $\hat{g}_{t+m}$ is the normalized gradient vector at $t+m$ iteration step.
Then we determine the learning rates $\{\eta_k\}_{k=1}^{m}$ so that ${\bf x}_{t+m}$ is closer to 
${\bf x}^{*}$ than that via the ordinary NGD. 
More precisely, we derive the inequality connecting $\|{\bf x}_{t+m} - {\bf x}^*\|^2$ to 
$\|{\bf x}_{t} - {\bf x}^*\|^2$, like Eq.~\eqref{eq:dist_t+2_optimal}. 
Consequently, the problem for determining the proper learning rates reduces to the one to find 
the minimum of the following quadratic function $h({\bf y})$ with ${\bf y} = [\eta_1,\ldots,\eta_m]$:
\begin{equation}
\begin{split}
    h({\bf y}) &= {\bf y}^{\mathrm{T}} A {\bf y} + C{\bf y}, \\
    \quad A &=  
     \left[\begin{array}{cccc} 1 & \delta_{0,1} & \ldots & \delta_{0,m-1}  \\ \delta_{0,1} & 1 & \ldots & \delta_{1,m-1}  \\ \vdots & \vdots& \ddots & \vdots  \\ \delta_{0,m-1} & \delta_{1,m-1} & \ldots & 1 \end{array} \right],  \\
    C &= 2\epsilon/\kappa \left[\begin{array}{cccc} 1 & 1+\delta_{0,1} & \ldots & 1+\sum_{i=0}^{m-2}\delta_{i,m-1}\\ \end{array} \right],\\
\end{split}
\end{equation}
under the constraint ${\bf y} \le 0$, where $\delta_{i,j}=\braket{\hat{g}_{t+i},\hat{g}_{t+j}}$.
Here we utilize Lemma \ref{lem:grad3} to derive $C$ in $h({\bf y})$. 
Since the quadratic programming can be efficiently solved, our method can compute the proper 
learning rates at each iteration step quickly. 
However there is a caveat for this method in terms of numerical calculation; that is, the computation 
process can become unstable. 
For instance, when we use NGD2 (which is exactly the same as the one-step historical case), 
$\delta=-1$ results in the divergence of $\eta_1$ and $\eta_2$. 
This is because the minimization of the function $h(\eta_1,\eta_2)$ in Eq.~\eqref{eq:motivating_ex} 
is reduced to lowering $\eta_2$ as small as possible, as shown below; 
\begin{equation*}
\begin{split}
    h(\eta_1,\eta_2) &= \eta_{1}^{2} + \eta_{2}^{2} -2 \eta_{1}\eta_{2} + 2\eta_1\left(\epsilon/\kappa\right)\\
    &= \left(\eta_1-\eta_2+\epsilon/\kappa \right)^2 +2\eta_2(\epsilon/\kappa) - (\epsilon/\kappa)^2.
\end{split}
\end{equation*}
Hence, in the numerical simulation shown in the next section, we add extra constraint ${\bf y} \ge k$ 
with a negative constant $k$ to avoid the computational instability.

Lastly note that the proposed NGD is somehow similar to the momentum method in the sense 
that both use the past gradient information. 
However, our proposal differs with respect to the update rule, which is derived based on SLQC 
property where NGD can converge to the global minimum.

%%%%%%%%%%%%%%%%% Results %%%%%%%%%%%%%%%%%%%%%%%

\section{Numerical simulations}\label{sec:exp_result}

In this section, we show numerical simulations for several optimization problems appearing quantum 
chemistry, to test the performance of NGD. 
The goal is to find the ground state of a given Hamiltonian $H$, using VQE. 
The basic procedure of VQE is as follows; given a PQC $U(\theta)$ with $n_p$ tunable parameters 
$\theta = \{\theta_1,\ldots,\theta_{n_p}\}$, $\theta$ is repeatedly updated by a classical optimizer 
so that the energy (the objective function) $f(\theta) = \braket{\Phi(\theta) |H|\Phi(\theta)}$ is 
reduced to its minimum, where $\ket{\Phi(\theta)} = U(\theta)\ket{{\bf 0}}$ with an initial state 
$\ket{{\bf 0}}$. 
We study five VQE problems; we begin with a toy problem and then move to four quantum 
chemistry problems studying ${\rm H}_2$, LiH, ${\rm H}_4$, and the transverse field Ising model.

All numerical simulations are performed using the statevector simulator which does not introduce 
a statistical error due to the measurement process, on Qiskit \cite{Qiskit} (version 0.24). 
As for the classical optimizers, we consider GD, NAG, ADAM (i.e., gradient-based optimizers 
without normalization), NGD, and the normalized NAG. 
Note that we use the parameter shift rule \cite{Mitarai_2018,schuld2019evaluating} to calculate 
the gradient vector of the objective functions. 
Also the learning rate of each optimizer is fixed to $0.05$. 
As for the historical NGD, we also set $\eta = \epsilon/\kappa=0.05$. 
To solve the quadratic programming problem discussed in Section III C, we use CVXOPT~\cite{2020ascl.soft08017A}, 
a package for convex optimization; in particular we choose the negative constant $k=-1000$, 
for the purpose of mitigating the computational instability.

\begin{figure*}[h]
\begin{center}
\begin{tabular}{ccc}
\subfigure[GD]{
\includegraphics[scale=0.33]{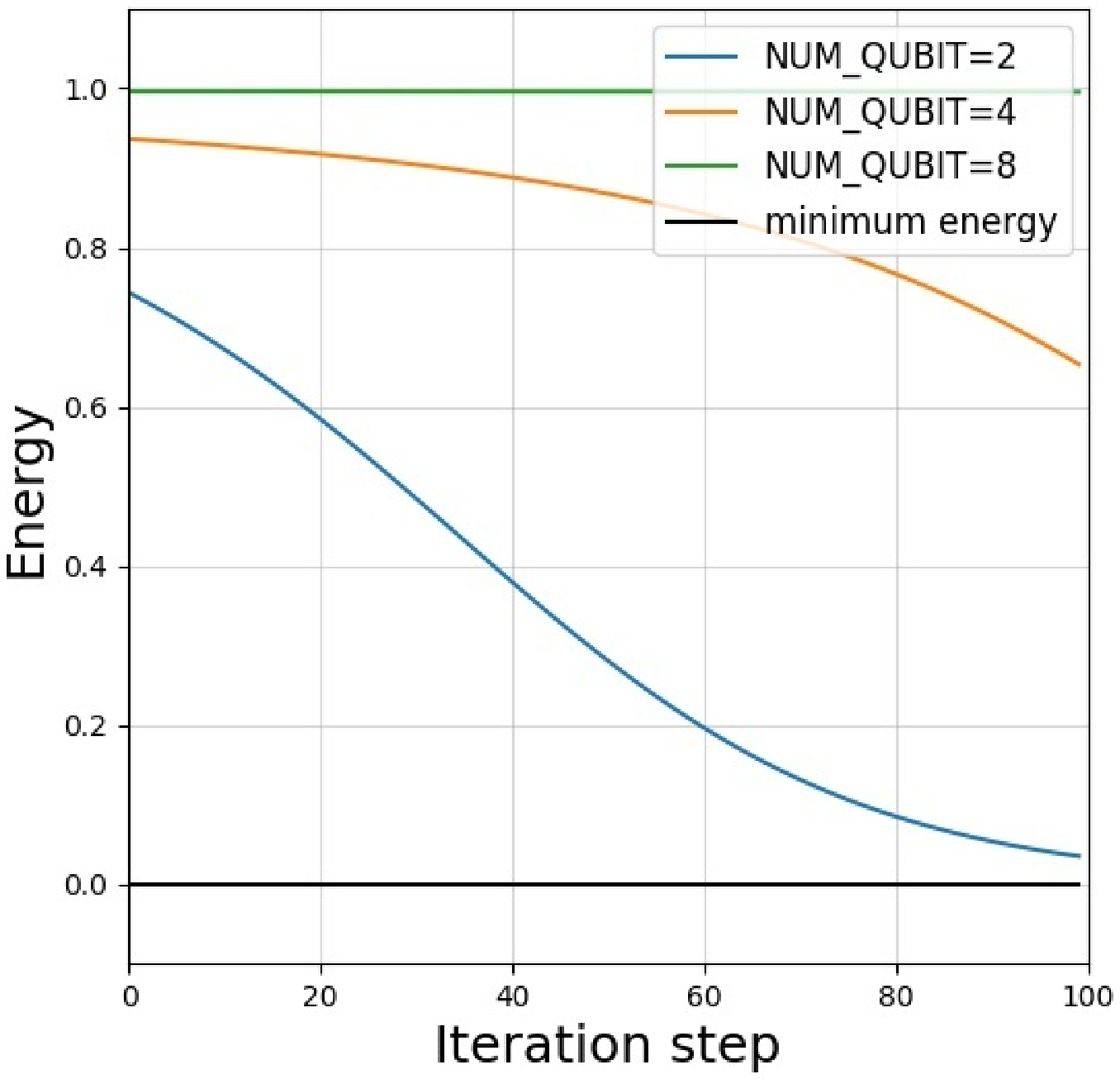}
\label{fig:wup_gd}
} &
\subfigure[NAG]{
\includegraphics[scale=0.33]{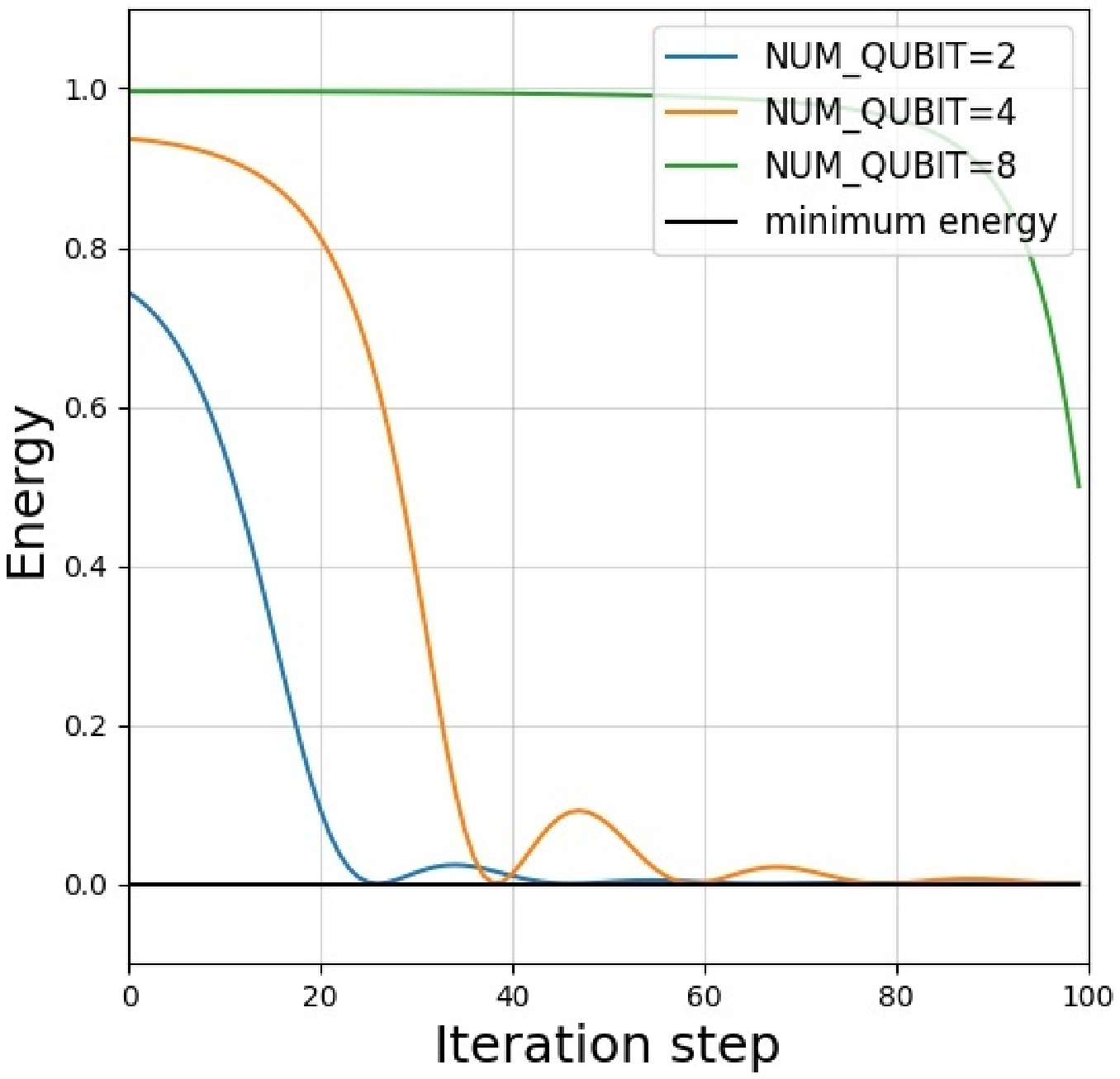}
\label{fig:wup_nag}
} &
\subfigure[ADAM]{
\includegraphics[scale=0.33]{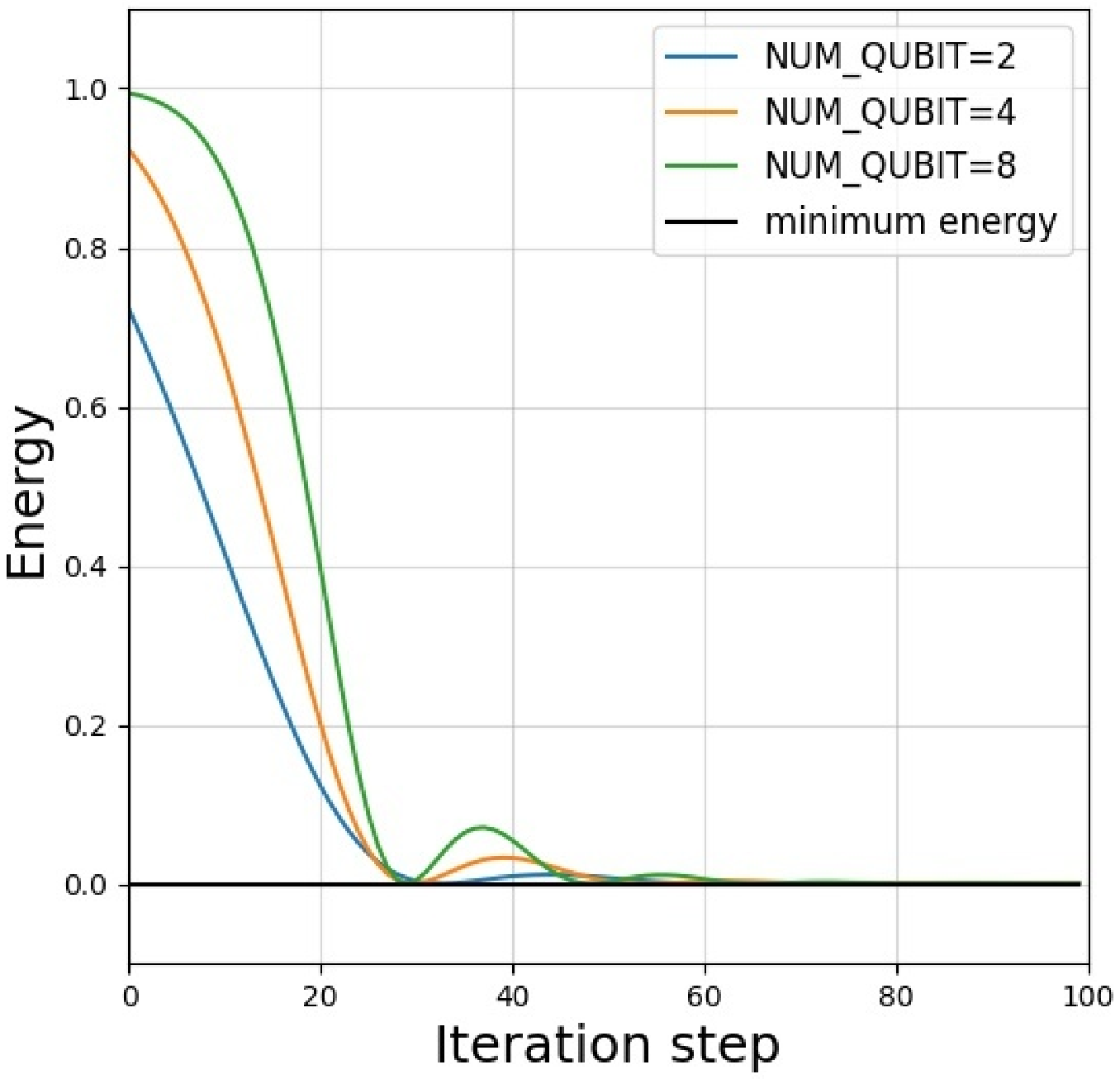}
\label{fig:wup_adam}
} \\

\subfigure[NGD]{
\includegraphics[scale=0.33]{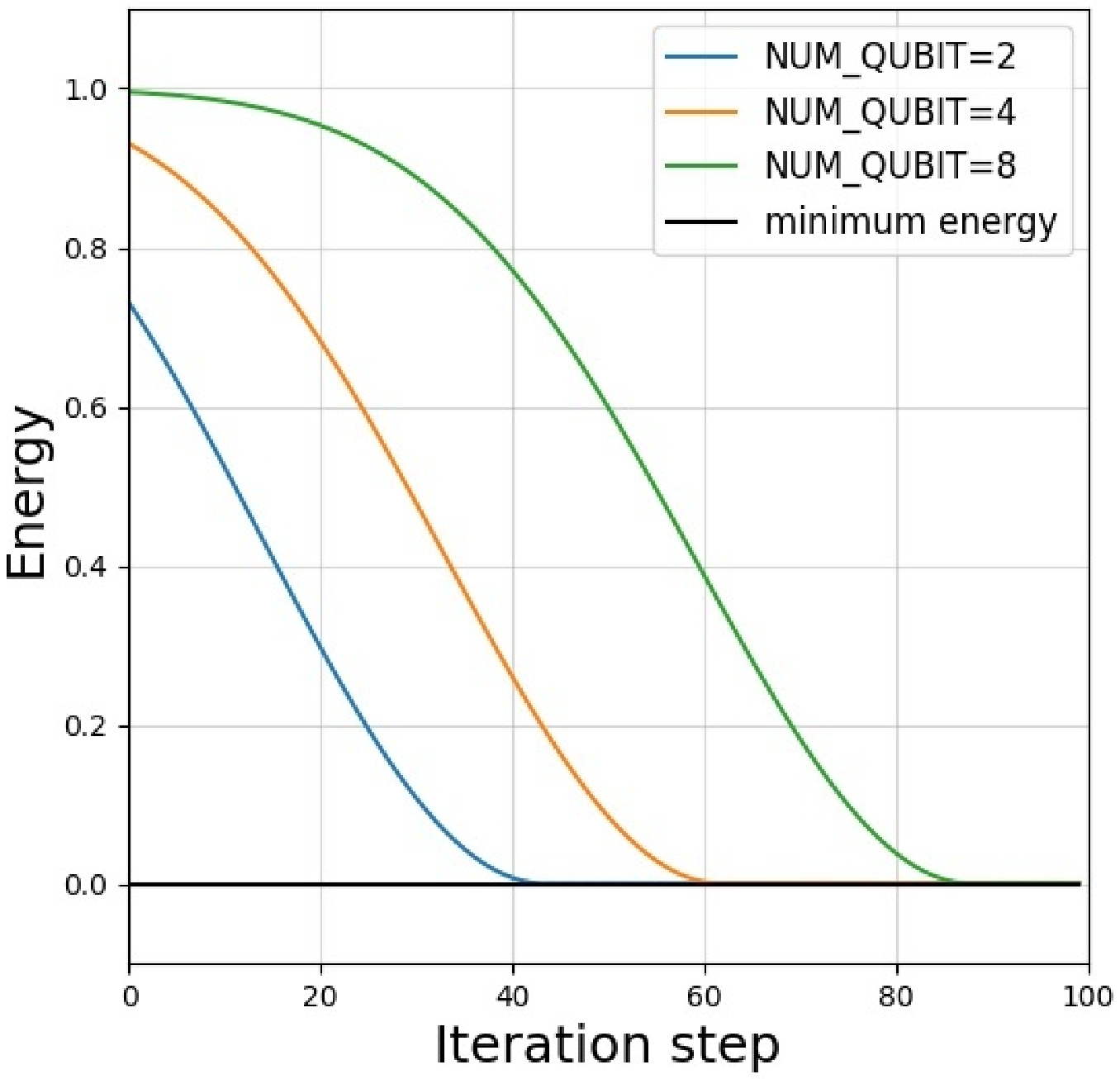}
\label{fig:wup_ngd}
} &
\subfigure[The normalized NAG]{
\includegraphics[scale=0.33]{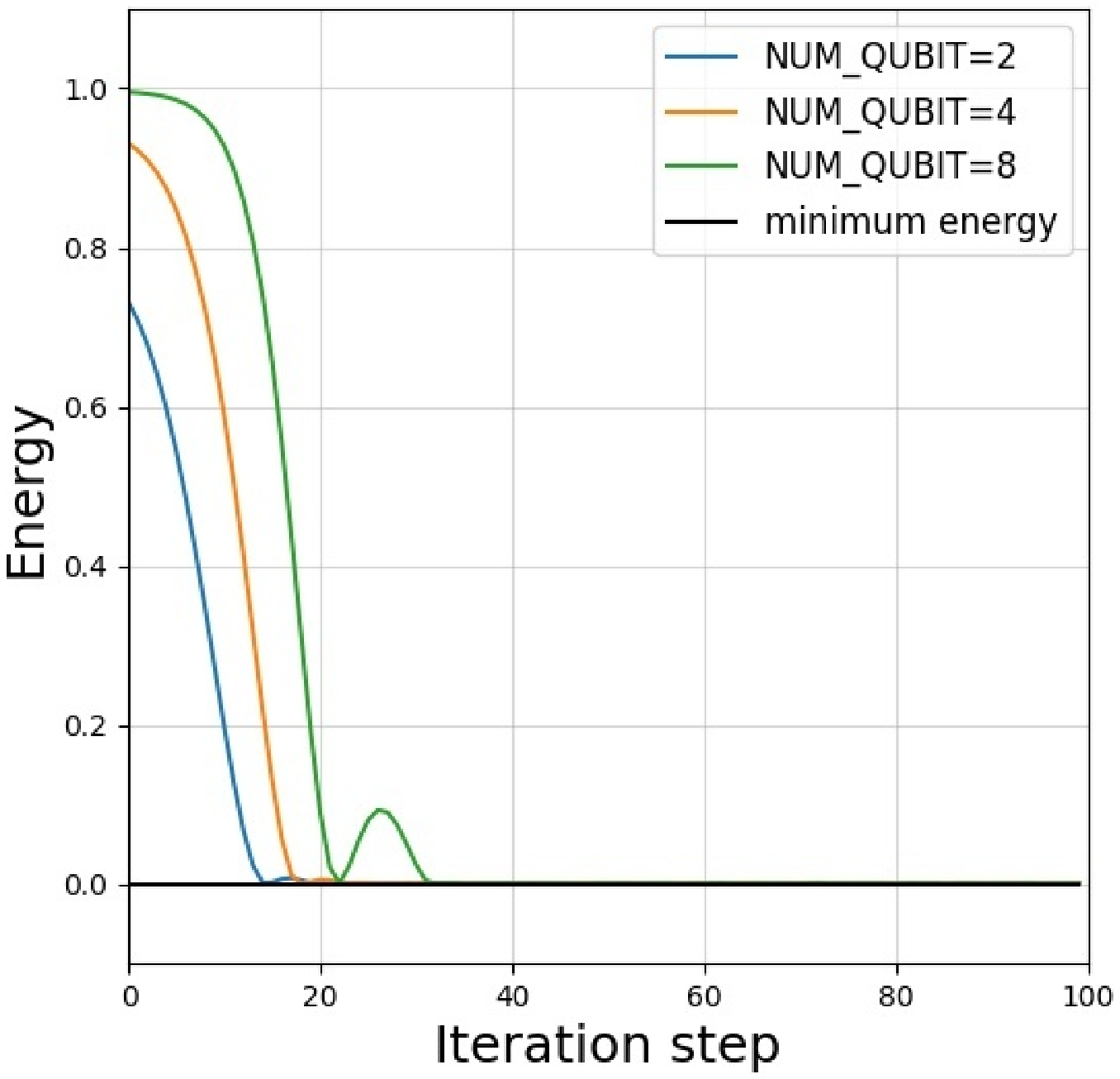}
\label{fig:wup_nnag}
} &
\subfigure[NGD2]{
\includegraphics[scale=0.33]{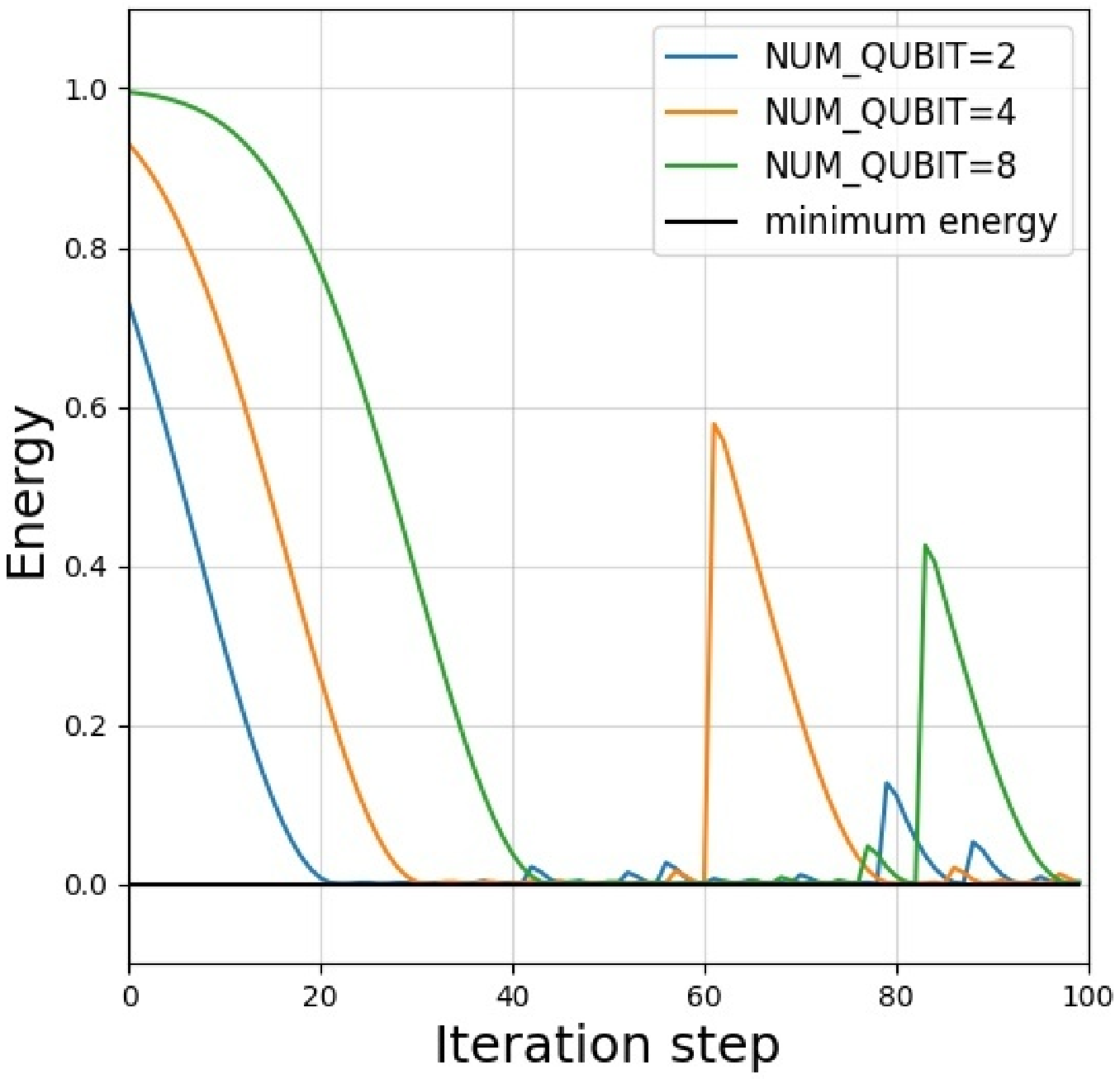}
\label{fig:wup_ngd2}
} \\
\end{tabular}
\caption{Energy of the narrow gorge potential in the case of $n=2$ (blue), $4$ (orange), 
and $8$ (green) against the iteration steps for each optimizer. 
The upper three panels (a, b, c) and lower three panels (d, e, f) show the results for the optimizers 
without normalized gradient vector and those for the optimizers with normalization, respectively. }
\label{fig:gorge_res}
\end{center}
\end{figure*}

%%%%%%%%%%%%%%%%%%%%%%%%%%%%%%%%%%%%%%%%%%%%%%%%
\subsection{Toy narrow gorge problem}

We begin with a toy \textit{narrow gorge} problem, which was studied in \cite{arrasmith2021equivalence}. 
The narrow gorge is a type of the energy landscape, such that the well around the minimum shrinks 
as the number of qubits increases. 
As a result, the vanishing gradient issue can be well observed in this problem. 
That is, the norm of the gradient vector rapidly decreases in all the parameter space except at around 
the minimum, as the number of qubits increases. 
Hence we expect to see the informative difference by comparing the convergence speed of the 
optimizers with and without normalization of the gradient, depending on the number of qubits. 
We take the same problem setting as \cite{cerezo2021cost}. 
The Hamiltonian is $H=\sum_{k=1}^{n} \sigma_{X}^{(k)}$, whose ground state is 
$\ket{{\bf 0}}=\ket{0}^{\otimes n}$, where $\sigma_{X}$ is the Pauli $X$ operator and $n$ is the number 
of qubits. 
The PQC is chosen as $U(\theta)=\otimes_{k=1}^{n} e^{-i\theta_k\sigma_{X}^{(k)}}$. 
The goal is to optimize the parameters $\theta=\{\theta_k\}_{k=1}^{n}$ so that 
$U(\theta)\ket{{\bf 0}}=\ket{{\bf 0}}$.

In this work, we perform the simulation with different number of qubits $n=2, 4, 8$, to see that NGD indeed resolves the vanishing gradient issue and, at the same time, to evaluate the convergent speed of NGD. 
Note that the number of qubits is equal to that of parameters, due to the tensor-product structure 
of the aforementioned PQC. 
The optimal parameters of this task are all zeros, i.e. $\theta_k = 0$ for all $k$. 
Hence we set the initial parameters as $\theta_k=\pi/2$ for all $k$, to avoid that the initial point 
would immediately get close to the optimal point. 
Also the total number of iterations for the optimization is fixed to 100.

In Fig. \ref{fig:gorge_res} we show the energy of the narrow gorge potential, against the iteration 
steps for each optimizer, where the blue, orange, and green lines in each figure show the case of 
$n=2, 4$, and $8$, respectively. 
The upper three panels show the results for the optimizers without the normalized gradient vector, 
and the lower three show the optimizers with the normalized gradient vector. 
From these figures, we observe that the optimizers without normalization crucially suffer from 
the vanishing gradient issue, except for ADAM. 
On the other hand, the optimizers with normalization can still decrease the energy even when $n=8$. 
Note that the performance of optimizers appears to be degraded as $n$ increases, because the 
distance between the initial and the optimal points gets larger.

Moreover, Fig.~\ref{fig:gorge_res} (f) shows that the one-step historical NGD converges faster 
than the ordinary NGD, as proven in Corollary III.1. 
Here, we study if the convergence speed can be further improved by increasing the number 
of gradient information at the previous iteration steps. 
Fig.~\ref{fig:warmup_ngdm} shows the energy of narrow gorge potential for the case of $n=8$, 
for NGD, NGD2, NGD3, and NGD4. 
Recall that NGD$m$ means the historical NGD with $m$ past normalized gradient vectors. 
Clearly, Fig.~\ref{fig:warmup_ngdm} shows that NGD$m$ with bigger $m$ converges faster. 
However, this result also shows the issue of computational instability of NGD$m$, as indicated 
in Section \ref{sec:method}. 
In particular, this issue seems more likely to occur, as we utilize more normalized gradient vectors. 
Thus, there is a room for improvement in our method, to fully exploit the past normalized gradient 
vectors without suffering from this computational instability issue. 
Yet, we underscore that NGD4 (as well as the Normalized NAG) is the first to reach within an 
error of $10^{-2}$ from the minimum energy.

\begin{figure}[tbhp]
    \centering
    \includegraphics[width=\linewidth]{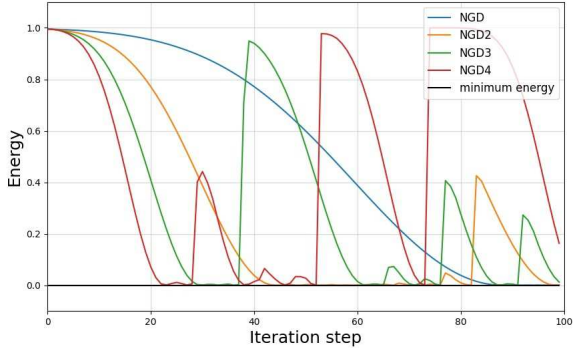}
    \caption{Energy of the narrow gorge potential in the case of $n=8$, against the iteration steps 
    for NGD, NGD2, NGD3, and NGD4.}
    \label{fig:warmup_ngdm}
\end{figure}

%%%%%%%%%%%%%%%%%%%%%%%%%%%%%%%%%%%%%%%%%%%%%%%%
\subsection{$H_2$ molecule}

\begin{figure}[tbp]
    \centering
    \includegraphics[width=\linewidth]{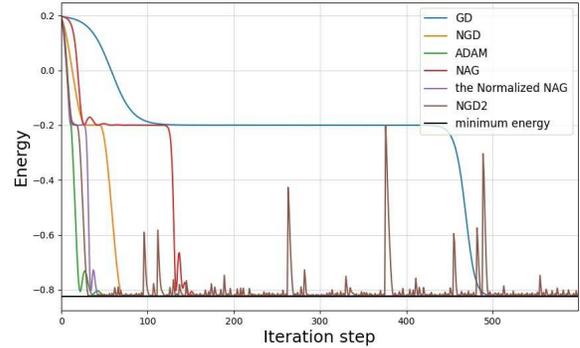}
    \caption{Energy of ${\rm H}_2$ molecule against the iteration steps for each optimizer}
    \label{fig:H2_res}
\end{figure}

We next examine the problem of finding the ground state of ${\rm H}_2$ molecule. 
The simplified Hamiltonian of ${\rm H}_2$ is expressed as
\begin{equation*}
    H=\alpha (\sigma_Z \otimes I + I \otimes \sigma_Z) + \beta (\sigma_X \otimes \sigma_X),
\end{equation*}
where $\sigma_Z$ represents Pauli $Z$ operator and $(\alpha,\beta) = (0.4,0.2)$. 
It was reported \cite{yamamoto2019natural} that the VQE using GD with learning rate $\eta=0.05$ 
gets stuck in a plateau for a while and then escapes later, when a single depth 
$Ry$ ansatz with initial parameters  $(\theta_1,\theta_2,\theta_3,\theta_4)=(7\pi/32,\pi/2,0,0)$ are used. 
This phenomenon arises because GD first arrives in the vicinity of the first excited state, where the 
gradient vector vanishes. 
Here we test GD, NAG, ADAM, NGD, the normalized NAG, and NGD2 with the same $R_y$ 
anzats, to see if they would be trapped in this plateau and, when trapped, how fast 
they can escape from it.

Figure \ref{fig:H2_res} shows that all optimizers get stuck at the first excited 
state with energy $-0.2$. 
But notably, the optimizers with the normalized gradient vector evade the plateau 
faster than the others without normalization. 
For example, NGD can get out of the first excited state around 50 iteration steps, 
while GD requires 450 iterations. 
Moreover, we can also see the normalized NAG method outperforms the ordinary NAG. 
Importantly, NGD2 is the first to reach the minimum, while ADAM can get out of the 
plateau the fastest.

%%%%%%%%%%%%%%%%%%%%%%%%%%%%%%%%%%%%%%%%%%%%%%%%
\subsection{LiH molecule}

\begin{figure}[tbp]
    \centering
    \includegraphics[width=\linewidth]{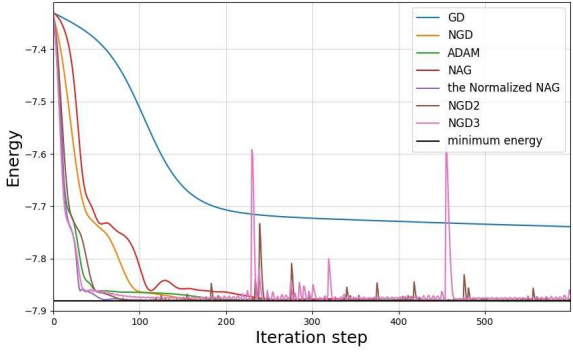}
    \caption{Energy of LiH molecule against the iteration steps for each optimizer}
    \label{fig:lih_res}
\end{figure}

The next case-study is on the LiH molecule; we consider the case where the 
interatomic distance is 1.5 \AA, resulting that two unoccupied orbitals are 
removed and the core is frozen. 
The Hamiltonian of LiH is constructed in the following way; the fermionic 
Hamiltonian is first constructed by the Hartree-Fock calculation with STO-3G 
basis \cite{hehre1969self} using the PySCF package \cite{sun2018pyscf}, which is then converted by the parity encoding method \cite{seeley2012bravyi} to the Hamiltonian. 
We apply VQE with two-depth $Ry$ ansatz, where the initial parameters are all zeros. 
The optimizers are GD, NAG, ADAM, NGD, the normalized NAG, NGD2, and NGD3.

The result is shown in Fig.~\ref{fig:lih_res}. 
In terms of the convergence speed, NGD and the normalized NAG are superior to GD 
and NAG, respectively. 
Also, NGD2 and the Normalized NAG reach the minimum the fastest, while NGD3 and 
ADAM are competitive with them at the beginning of the optimization process. 
Notably, NGD3 falls behind NGD2 to converge to the minimum, despite the fact 
that NGD3 utilizes more gradient information at the past iteration step.
The reason may be the computational instability that occurs when the learning rates 
are computed.
In fact, NGD3 shows the fastest convergence at first, but the energy begins to 
fluctuate at certain timestep.

%%%%%%%%%%%%%%%%%%%%%%%%%%%%%%%%%%%%%%%%%%%%%%%%
\subsection{${\rm H}_4$ molecule}

\begin{figure}[tbp]
    \centering
    \includegraphics[width=\linewidth]{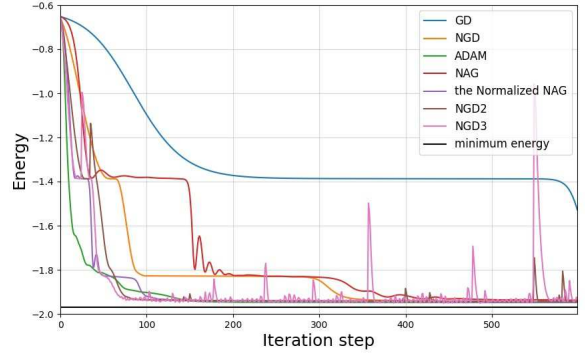}
    \caption{Energy of ${\rm H}_4$ molecule against the iteration steps for each optimizer}
    \label{fig:h4_res}
\end{figure}

We here consider the ${\rm H}_4$ molecule with square configuration with interatomic distance 1.277 \AA. 
The Hamiltonian of ${\rm H}_4$ is constructed in the same way to the LiH case. 
We use the five-depth $Ry$ ansatz as PQC, where the initial parameters are all zeros, 
and the same set of optimizers.

In this case, all optimizers cannot arrive at the minimum within 600 iterations. 
However, we still observe the better convergence property of the optimizers that use 
the normalized gradient vector. 
Namely, NGD and the normalized NAG method quickly converge to the local minimum 
about -1.95, in comparison with GD and NAG method, respectively.
Moreover, NGD3 reaches the lowest value the fastest. 
Note that the reason of not achieving the exact minimum may not be attributed to 
the optimizers, but other factors such as the insufficient expressibility of the 
PQC ansatz.

%%%%%%%%%%%%%%%%%%%%%%%%%%%%%%%%%%%%%%%%%%%%%%%%
\subsection{Transverse field Ising model}

\begin{figure}[tbp]
    \centering
    \includegraphics[width=\linewidth]{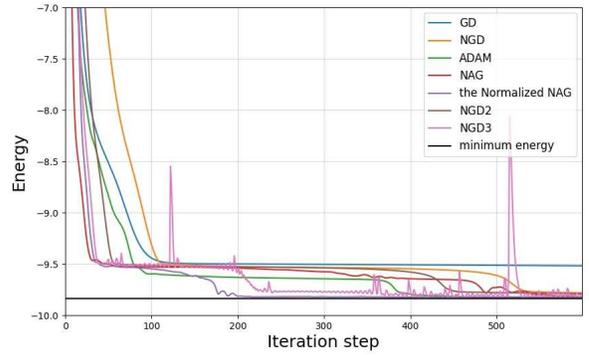}
    \caption{Energy of the transverse field Ising model, against the iteration 
    steps for each optimizer}
    \label{fig:Is_res}
\end{figure}

Lastly, we focus on the transverse field Ising model studied in 
\cite{sweke2020stochastic}, whose Hamiltonian is expressed as
\begin{equation*}
    H = \sum_{i=1}^{n-1} \sigma_{Z}^{(i)}\sigma_{Z}^{(i+1)} + \sum_{i=1}^{n} \sigma_{X}^{(i)}.
\end{equation*}
Here we set $n=8$. 
Again we use the two-depth Ry ansatz with all initial parameters given by $\pi/2$.

Unlike the previous problems, Fig.~\ref{fig:Is_res} shows the opposite results 
at the beginning of the optimization process; GD converges faster than NGD. 
This is because the chosen initial parameter is so far from the optimal point in 
the parameter space and the norm of the gradient vector is much bigger than one 
at around the initial point. 
This is actually seen in Fig.~\ref{fig:norm_grad}, showing the norm of the gradient 
vector against the optimization iterations for GD. 
However, this does not mean that NGD is inferior to GD.
In fact, Fig.~\ref{fig:Is_res} shows that GD gets stuck in a plateau, while NGD 
can escape from the plateau after roughly 500 iterations despite the poor convergence 
in the beginning.
This different behaviors can be, again, explained by Fig.~\ref{fig:norm_grad}, 
showing that the norm of the gradient vector quickly decreases as the optimization 
proceeds.
This reflects the effectiveness of the optimizers with normalized gradient vector 
for the convergence.

\begin{figure}
    \centering
    \includegraphics[width=\linewidth]{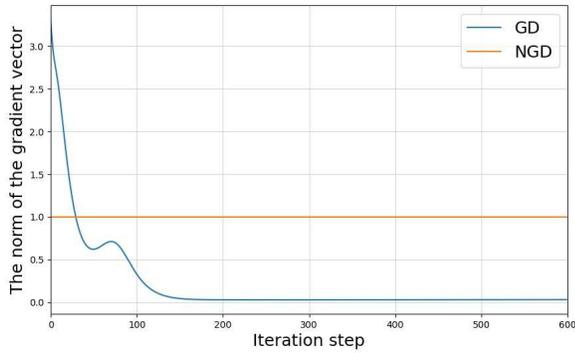}
    \caption{The norm of the gradient vectors against the iteration steps of GD for 
    the VQE calculation of the transverse field Ising model. For comparison, we also 
    show the norm of the gradient vector for NGD, which is always one.}
    \label{fig:norm_grad}
\end{figure}

%%%%%%%%%%%%%%%%% Conclusion %%%%%%%%%%%%%%%%%%%%%%%
\section{Conclusion}\label{sec:conclusion}

In this paper, we apply the NGD method to VQAs, to overcome the vanishing gradient issue. 
Several numerical simulations actually show that the optimizers with the normalized 
gradient vector have good convergence property, compared to the optimizers without 
normalization. 
Moreover, we proposed a new NGD that uses some normalized gradient vectors 
computed in the past optimization steps; this historical NGD is guaranteed to have 
a faster convergence property compared to the ordinary NGD, and actually we have 
demonstrated that this indeed accelerates the convergence speed of NGD.

We hope this work will pave a new way to deal with the vanishing gradient problem, 
that often appears in the optimization process of the VQAs.
Note also that, since the application of the historical NGD is not limited to 
VQAs, we hope that our method might be useful in e.g., machine learning.

\section*{Acknowledgment}
This work was supported by MEXT Quantum Leap Flagship Program Grant Number 
JPMXS0118067285 and JPMXS0120319794.
RR would like to thank Daisuke Okanohara for introducing~\cite{hazan2015beyond} that inspired him to conceive the idea of historical NGD. 
\bibliographystyle{IEEEtran}
%\bibliography{IEEEabrv,references}
\bibliography{references}

\end{document}